\newtheorem{remark}{Remark}
\newtheorem{theorem}{Theorem}
\newtheorem{lemma}[theorem]{Lemma}
\newtheorem{corollary}[theorem]{Corollary}
\newcommand{\p}{{\mathbb P}}		
\newcommand{\ER}{\text{ER}}				
\title{Rumors on evolving graphs through stationary times}
\author{Vicenzo Bonasorte}
\thanks{This work is part of the author's Masters dissertation, written under the supervision of Prof. Sandro Gallo and supported by Coordenação de Aperfeiçoamento de Pessoal de Nível Superior – Brasil (CAPES) – Finance Code 001 and by grants 2022/08843-6 and 2025/00805-6 of the São Paulo Research Foundation (FAPESP).\\
E-mail address: $<$vicenzop@usp.br$>$.}
\begin{document}

\begin{abstract}
We study rumor spreading in dynamic random graphs. Starting with a single informed vertex, the information flows until it reaches all the vertices of the graph (completion), according to the following process. At each step $k$, the information is propagated to neighbors of the informed vertices, in the $k$-th generated random graph. The way this information propagates from vertex to vertex at each step will depend on the ``protocol". We provide a method based on strong stationary times to study the completion time when the graphs are Markovian time dependent, using known results of the literature for independent graphs. The concept of strong stationary times is then extended to non-Markovian Dynamics using coupling from the past algorithms. This allows to extend results on completion times for non-Markov dynamics.\\

\noindent{\bf Keywords}: {\it Randomized rumor spreading, strong stationary times, perfect simulation}.
\end{abstract}

\maketitle
\tableofcontents    

\section{Introduction}

This paper focuses on Randomized Rumor Spreading protocols, a series of algorithms designed to disseminate a piece of information in a network. Some of the more important versions of this algorithm are the Push protocol, Pull protocol and Push-Pull protocol. In all of them we start with a single informed node. The protocol follows in a series of synchronous time-steps (or rounds). The Push and Pull variants are symmetric. While in the Push protocol each informed node sends the information to a single neighbor chosen uniformly at random, in the Pull protocol uninformed nodes are the active ones, they try to pull the rumor from a single neighbor chosen uniformly at random. Finally, the Push-Pull protocol simply combines Push and Pull, both informed and uninformed nodes are active.

In the literature one of the problems having most interest is the so-called \emph{completion time} problem: how many time-steps of the protocol are needed until the information reaches every vertex? Note that this question only makes sense in two main configurations. Connected static networks and dynamic networks.

\vspace{0.1cm}

\paragraph{\bf{Communication protocols on static networks.}}
These protocols were first introduced to propagate rumors on static networks. The Push protocol, for example, has been considered on the complete graph (see \cite{grimmett-85}), random graphs and the hypercube (see \cite{feige-90}).The work on static networks usually relies on spectral properties of the underlying graph. For example, the conductance of the graph was considered by \cite{panconesi-10} to prove a logarithmic upper bound for the completion time of the Push-Pull protocol. \cite{alexandre} considers the vertex expansion to obtain upper bounds on the completion time of the Push protocol and the Pull protocol on regular graphs. 

Many real life networks are not static. The connections in wireless and mobile networks, in a network of users exchanging e-mails or in social groups are in constant change. Since the above theory relies on the structure of the underlying graph, the study of rumor spreading on dynamic random networks requires new methods. Dynamic networks also allow us to study a different algorithm: the Flood protocol, in which informed nodes inform every neighbor each round. For static graphs the Flood protocol either takes a deterministic amount of steps until completion, or the information never reaches all nodes (non-connected graphs).
\vspace{0.1cm}

\paragraph{\bf{Rapid overview of the literature on random dynamic networks.}}

\cite{clementi2016rumor} studied the Push version of this algorithm in an underlying dynamic network. In one setting, each graph is a newly sampled (independently of everything else) Erd\H{o}s-Rényi random graph, $ER_n(p)$. They showed that the completion time for the Push protocol is $O(\frac{\log n}{\min\{1,np\}})$ with high probability, that is with probability decreasing polynomially to 0 as $n$ diverges. \cite{doerr} improved this bound when the edge probability is $\frac{a}{n}$, where $n$ is the number of edges and $a$ is a positive constant. They introduced a method for analyzing rumor spreading that does not rely on the specific definition of the protocol, under some sufficient conditions they provide the expectation and a large deviations bound for the completion time for several settings, apart from additive constants. 

\cite{daknama} uses the method of \cite{doerr} to analyze rumor spreading in a sequence of independent Erd\H{o}s-Rényi graphs for the Pull and Push-Pull protocol, when the edge probability is of the form $\frac{a}{n}$, $a > 0$. He proves that the expected completion time for both the Pull protocol and the Push-Pull is $O(\log n)$.

\cite{clementiflood} and \cite{clementi2016rumor} analyzed rumor spreading in edge-Markovian random graphs. In this dynamic graph, there is a binary Markov chain attached to each possible edge (independent of one another). If an edge was absent at time $t$, it will be present at time $t+1$ with probability $p$, if it was present at time $t$, it will be absent at time $t+1$ with probability $q$. In the former they considered the Flood protocol and prove that for any $p$ and $q$ the completion time is $O(\frac{\log n}{\log(1+np)})$ with high probability. In the latter they considered the rumor spreading according to the Push Protocol and prove that, for $p \geq \frac{1}{n}$ and when $q$ is constant (does not depend on $n$), the completion time is $O(\log n)$, with high probability.

As far as we know, non-Markovian settings have not been considered before in the literature and constitute a natural next step. This text provides a simple strategy to bound the completion time in such cases, as well.

\paragraph{\bf{Main contributions of this text.}} Consider a sequence of random graphs in which the set of vertices is fixed, and the edges evolve, independently of each other, according to a stationary process indicating whether at each time the edge is present or absent. We provide a simple general method allowing us to study the completion time of different protocols of rumor propagation on these general dynamical random graphs using what is known for the same protocols when the edges dynamic is i.i.d. The idea of the method is to find, in the edges dynamic, a sequence of random times similar to what is called strong stationary times in the literature of Markov chains: instants at which the graphs are independent and distributed according to the marginal of the time stationary law.  If the distance between these times has a sufficiently light tail, we directly get upper bounds on the completion time of the protocol on the dependent sequence of graphs using what is known about the completion time in the independent sequence of graphs with equal marginal.
First, using the existing theory of strong uniform times we are, under some sufficient conditions on the edge transition matrix, able to bound the completion time of the Push, Pull, Flooding and Push-Pull protocols in the Markov dynamic. Bounds for the Push and Push-Pull protocols had not been done before in the literature, as far as we know. In the Push protocol setting we prove that the completion time is $O(n^{k-1}\log n)$ when $p = \frac{a}{n^k}$ and $q = 1$, for constants $a > 0$ and $k >1$, which is also new in the literature. Second, we extend the concept of strong stationary times to any non-Markovian process that can be perfectly simulated through coupling from the past algorithms. This is a problem having interest of its own and allows us to illustrate possibilities of randomized rumor spreading out of the scope of Markovianity. We give an explicit example of the case where the edge dynamics is a renewal process.

\section{A strategy based on strong stationary times}
\label{literature}

Consider a set of $n\in\mathbb N$ vertices, denoted $[n]$. Consider also the set of possible edges $E^{(n)}=\{(x,y):x,y\in[n]\}$, which has cardinal $|E^{(n)}|=C_2^n$. For any $e\in E^{(n)}$, consider a binary stochastic chain ${\bf X}(e)=(X_t(e))_{t\ge0}$, and assume that the chains ${\bf X}(e),e\in E^{(n)}$ are independent. The event $X_t(e)=1$ means ``the edge $e$ exists at time $t$''. If we define $\mathcal E^{(n)}_t:=\{e\in E^{(n)}:X_t(e)=1\},t\ge0$, we get a  sequence of random graphs $([n],\mathcal E^{(n)}_t),t\ge0$.  

For instance, if for each $e$ the chain $X_t(e),t\ge0$ is i.i.d., and if $\p(X_t(e)=1)=p$ for any $e\in[n]$, then we obtain a sequence of independent copies of the Erd\H{o}s-R\'enyi graph $\ER_n(p)$.

Each node is either informed or uninformed. Starting with a single informed node, the rumor spreads according to an information spreading protocol.

How many steps it takes for a protocol to inform all the nodes? This number of steps is called  \emph{completion time}.  

The proof strategies for completion time bounds is heavily dependent on the underlying random graph and the properties of the protocol. The next Lemma \ref{strategy} explores a new strategy to attack these problems in dependent graphs using the results in the i.i.d. case.

\label{strategy_section}
\begin{lemma}
\label{strategy}
 Suppose that we have a non-independent sequence of graphs $([n],\mathcal E^{(n)}_t),t\ge0$, but still, identically distributed, that is, for any $e\in [n]$, ${\bf X}(e)$ is stationary. Let $\pi = (\pi(0),\pi(1))$ denote the 1-step marginal stationary distribution of ${\bf X}(e)$. Consider an information spreading protocol. Suppose that its completion time in the i.i.d. case is $O(r(n, \pi(1)))$ with high probability. If we are able to extract  random sequence $t_1,t_2,\ldots$ of increasing integers such that: 
\begin{enumerate}
\item $([n],\mathcal E^{(n)}_{t_i}),i\ge0$ ($t_0:=0$) are independent and distributed according to $\pi$.
\item For some positive constants $C,D$ we have $\p(t_{C r(n, \pi(1))}>D r(n,\pi(1)))$ vanishes polynomially. 
\end{enumerate}
Then, we have that the completion time for the same protocol on $([n],\mathcal E^{(n)}_t),t\ge0$ is $O(r(n,\pi(1)))$. 
\end{lemma}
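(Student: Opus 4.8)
The plan is to couple the true dynamics on the dependent sequence with a ``lazy'' copy of the same protocol that updates only at the special times $t_1,t_2,\ldots$, and then use monotonicity to show that the true dynamics dominates the lazy one. By condition (1) the graphs $\mathcal E^{(n)}_{t_1},\mathcal E^{(n)}_{t_2},\ldots$ seen by the lazy process are i.i.d.\ with stationary marginal $\pi$, so the lazy process is exactly the protocol run on an i.i.d.\ sequence of $\ER_n(\pi(1))$ graphs. Consequently the number $\tilde T$ of \emph{special} rounds it needs to inform all of $[n]$ has the same law as the i.i.d.\ completion time, and by hypothesis $\p(\tilde T>C r(n,\pi(1)))$ vanishes polynomially for a suitable constant $C$.

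First I would make the domination precise. Write $I_t$ for the informed set of the true process at time $t$ and $J_i$ for that of the lazy process after $i$ special rounds, both started from the common source and driven by coupled protocol randomness at the special times, so that the $i$-th special round of the lazy process uses the graph $\mathcal E^{(n)}_{t_i}$ and the same per-vertex choices as the true round indexed $t_i$. Two monotonicity facts do the work: (a) for each of the protocols (Flood, Push, Pull, Push--Pull) the informed set is nondecreasing in time, since an informed vertex stays informed; and (b) one round of the protocol is monotone in its input, i.e.\ if $B\subseteq A$ then running the round on the same graph with coupled randomness gives $B'\subseteq A'$. Granting (a) and (b), an induction on $i$ (with the convention that $\mathcal E^{(n)}_t$ drives the round producing $I_t$) yields $J_i\subseteq I_{t_i}$: between $t_i$ and $t_{i+1}$ the true process only gains informed vertices by (a), and at $t_{i+1}$ both processes apply the protocol to the same graph $\mathcal E^{(n)}_{t_{i+1}}$, so (b) propagates the inclusion.

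It follows that as soon as the lazy process is complete the true one is too: if $J_{\tilde T}=[n]$ then $I_{t_{\tilde T}}=[n]$, so the true completion time is at most $t_{\tilde T}$. It then remains to convert the bound on the number of special rounds into one in real time. On the event $\{\tilde T\le C r(n,\pi(1))\}$ the fact that $(t_i)$ is increasing gives $t_{\tilde T}\le t_{C r(n,\pi(1))}$, so a union bound yields
\[
\p\bigl(\text{true completion time}>D r(n,\pi(1))\bigr)\le\p\bigl(\tilde T>C r(n,\pi(1))\bigr)+\p\bigl(t_{C r(n,\pi(1))}>D r(n,\pi(1))\bigr),
\]
whose first term vanishes polynomially by the i.i.d.\ hypothesis and whose second vanishes polynomially by condition (2). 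Hence the true completion time is $O(r(n,\pi(1)))$ with high probability, as claimed.

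The main obstacle is step (b): the single-round monotonicity is protocol-specific and, for the randomized protocols, genuinely requires the right coupling of the per-vertex choices (for Push, informed vertices present in both processes push to the same random neighbor; for Pull, a commonly uninformed vertex pulls from the same random neighbor; Flood is immediate). The remaining points are bookkeeping, the only care being to take the constant $C$ in condition (2) no smaller than the constant hidden in the i.i.d.\ completion bound — harmless, since that bound only improves as its constant grows and $C$ is at our disposal in the applications.
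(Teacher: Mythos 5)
Your proposal is correct and follows essentially the same route as the paper: run the protocol on the i.i.d.\ subsequence $(\mathcal E^{(n)}_{t_i})_i$, observe that the true completion time is at most $t_{\bar N}$ where $\bar N$ is the lazy/i.i.d.\ completion time, and conclude with the same union bound $\p(N>Dr)\le\p(\bar N>Cr)+\p(t_{Cr}>Dr)$. The only difference is that you spell out the coupling and per-round monotonicity needed to justify the domination $N\le t_{\bar N}$, a step the paper's proof asserts implicitly in its first inequality without further argument.
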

\begin{proof}
First, we are in the presence of a sequence of independent graphs $([n],\mathcal E^{(n)}_{t_i}),i\ge0$ which are $\ER_n(\pi)$ distributed, and under the assumptions of the protocol only $O(r( n,\pi(1)))$ of them is enough to achieve completion. So, letting $N$ denote the completion time of the rumor on the original sequence of dependent graphs with $\pi$ as stationary distribution, and $\bar N$ denote the completion time of the rumor on the i.i.d. graphs with distribution $\pi$, we have
\begin{align*}
\p(N> D r(n))&\le \p(t_{\bar N}> D r(n))\\
&=\p(t_{\bar N}> D r(n),\bar N> C r(n))+\p(t_{\bar N}> D r(n),\bar N\le C r(n))\\
&\le\p(\bar N> C r(n))+\p(t_{C r(n)}> D r(n)).
\end{align*}
The first term vanishes polynomially as $n$ diverges by the conditions on the protocol and the second also, by property (2) of the $t_i,i\ge1$. 
\end{proof}

With this simple strategy, we are able to bound the completion time of information spreading protocols in settings that have not been considered previously in the literature, as far as we know. Namely, we are able to provide upper bounds for the Pull and Push-Pull protocols in the Markov dynamic and bound the Push, Pull, Push-Pull and Flood protocol for a renewal process dynamic.

\section{Application 1: The Markov case}

Let $n$ be fixed, we will omit mention to $n$ in our notation. In this case, for any $e\in E$ the process ${\bf X}(e)$ is a Markov chain, we denote such graphs edge-Markovian random graphs. Therefore $(\mathcal \varepsilon_t)_{t\ge1}$ is Markovian as well, and it has transition matrix 
\[
Q(a|b)=\prod_{e\in E}P_e(a(e)|b(e))
\]
in which $a,b\in \{0,1\}^{E}$ are binary vectors $(a(e))_{e\in E}$ and $(b(e))_{e\in E}$. 

We will investigate the homogeneous case in which  $P_e=P$ for any $e\in E$. We define $p:= P(1|0)$ and $q:= P(0|1)$. Suppose that $P$ is irreducible, then there exists a unique stationary measure $\lambda$ for $P$, and therefore $\pi$ defined by $\pi(a):= \prod_{e}\lambda(a(e))$ is the unique stationary for $Q$. 

\vspace{0.1cm}

\noindent{The \emph{maximal separation distance}}  of order $k$ of $Q$ is defined as
\[
s(k):=\max_as_a(k):=\max_a\max_b\left(1-\frac{Q^k(a,b)}{\pi(a)}\right)
\]
and observe that  
\[
s(k)=1-\left(\min_a\min_b\frac{P^k(a,b)}{\lambda(a)}\right)^{|E|}.
\]

In the binary case, 
\[
\min_a\min_b\frac{P^k(a,b)}{\lambda(a)}\ge1-|\Delta|^k\max\left\{\frac{\lambda(0)}{\lambda(1)};\frac{\lambda(0)}{\lambda(1)};1\right\}=:1-\rho|\Delta|^k,
\]
where 
\[
\Delta:=1-P(1|0)-P(0|1).
\]
So 
\[
s(k)\le 1-(1-\rho|\Delta|^k)^{|E|}.
\]
In the special case where $q= 1 -g(n)$ and $p= f(n)$, in which $f(n)$, $g(n)$ are decreasing functions with constant limit $\gamma \geq 0$,  and $|g(n) - f(n)| \leq \frac{M}{n^\alpha}$, $M,\alpha >0$.We get $\rho= h(n):=\max\left\{\frac{f(n)}{1 - g(n)}, \frac{1 - g(n)}{ f(n)}\right\}$ and $\Delta=f(n) - g(n) $, so
\[
s(k)\le 1-\left(1 - \left( |g(n) - f(n)| \right)^k h(n)\right)^{C_2^n}.
\]
For suitable constants, we will bound:

\begin{align}\label{eq:UBs}
s(k)\leq n^2\left(\frac{M}{n^\alpha}\right)^k\leq  \left(\frac{1}{n^t}\right)^{k-l}.
\end{align}

\vspace{0.1cm}

\noindent{\bf Strong uniform time.} See \cite[Section 6.2.2]{levin2017markov} for the definition of a randomized stopping time of a Markov chain. According to \cite{aldous1987strong}, a \emph{strong uniform time} for the Markov chain $\mathcal \varepsilon_t,t\ge0$ \emph{started from a given state $a_0\in E$}  is a randomized stopping time  such that $\mathcal \varepsilon_t\sim \pi$ and $\mathcal \varepsilon_t\perp T$.

Now we use Proposition 3.2 item (b)  of \cite{aldous1987strong} which states that, given any initial state $a_0$, there exists a strong uniform time $T$ for $\mathcal \varepsilon_t,t\ge0$ such that
\[
\p(T>k|\mathcal E_0=a_0)=s_{a_0}(k). 
\]
It follows that, for any $k\ge4$,
\[
\sup_{a_0}\p(T>k|\mathcal E_0=a_0)\le \left(\frac{1}{n^t}\right)^{k-l}. 
\]

\vspace{0.1cm}

{\bf Block argument: conclusion of the proof.} For any $a\in E$, denote by $T_{a}$ strong uniform time for the chain started from state $a$. Now, let us define the sequence $t_0:=0$ and $t_i=t_{i-1}+T_{\mathcal E_{t_{i-1}}}$ for any $i\ge1$. By strong Markovianess, properties (1) and (2) listed above are valid for $t_i,i\ge0$. So it remains to prove that $\p(t_{C r(n)}>D r(n))$ vanishes polynomially as $n$ diverges.

Then, by \eqref{eq:UBs}, for any $a\in E$,  $T_{a}-l\le G$ stochastically, where $G$ is Geo$(1-1/n^t)$. So we also have, stochastically, $t_i\le \sum_{j=1}^i(l+G_j)$ where $G_j,j\ge1$ are independent and identically distributed to $G$. Thus,
\begin{align*}
\p(t_{C r(n)}>D r(n))&\le  \p(\sum_{i=1}^{C r(n)}(G_i+l)> D r(n))\\
&=\mathbb{P}\left(\sum_{i = 1}^{C\text{ } r(n)} (G_i -1)> (D-C-Cl)r(n)\right)\\
&\le \exp\left(-(1-\frac{1}{s})^2 \frac{sr(n)}{2}\right).
\end{align*}
where $\frac{D}{C} - 1 - l =: s > 1$. Then, 
\[
\p(t_{C r(n)}>D r(n))\le \exp\left(-(1-\frac{1}{s})^2 \frac{sr(n)}{2}\right)
\]
which vanishes polynomially if $r(n) = \Omega(\log n)$.

As an example, consider the Push protocol spreading in a sequence of Markovian random graphs with stationary edge-probability $\pi(1) > \frac{1}{n}$. We have that the \emph{completion time} is $O(\log n)$ with high probability. Thus:
\begin{center}
\begin{equation}
\label{example}
\exp\left(-(1-\frac{1}{s})^2 \frac{s \log n}{2}\right) = n^{-(1-\frac{1}{s})^2 \frac{s}{2}}.
\end{equation}
\end{center}
For sufficiently large $s$,  Equation \eqref{example} decreases polynomially to 0 (the exponent is positive), and that is a sufficient condition to state that the Push protocol takes $O(\log n)$ steps to complete transmission in the Markov dynamic.

We have proved the following Theorem \ref{principal}

\begin{theorem}
\label{principal}
    Let $G = (G_t)_{t\geq 0} = ([n], \varepsilon_t)$ be a sequence of edge-Markovian random graphs with edge transition matrix:
    \[P = \begin{pmatrix}
    1- f(n) & f(n)\\
    1 - g(n)  & g(n)
    \end{pmatrix},\]
in which $f(n)$, $g(n)$ are decreasing functions with constant limit $\gamma \geq 0$,  and $|g(n) - f(n)| \leq \frac{M}{n^\alpha}$, $M,\alpha >0$. Let $\pi =  (\pi(0) , \pi(1))$ be the stationary distribution. Let $T_{\text{Ind}}$ be the completion time of some protocol spreading information in a sequence of independent Erd\H{o}s-Rényi graphs with parameter $\pi(1)$. Suppose that $T_{\text{Ind}}$ is $O(r(n))$ with high probability. If $r(n) = \Omega(\log n)$, then the completion time of the protocol over $G$ is $O(r(n))$ with high probability.
\end{theorem}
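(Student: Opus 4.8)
The plan is to realize Theorem \ref{principal} as a direct application of Lemma \ref{strategy}. Since the hypothesis already grants that the completion time $T_{\mathrm{Ind}}$ on i.i.d.\ Erd\H{o}s--R\'enyi graphs with parameter $\pi(1)$ is $O(r(n))$ with high probability, it suffices to exhibit an increasing random sequence $t_1,t_2,\ldots$ along which the graphs $([n],\varepsilon_{t_i})$ are independent and $\pi$-distributed (property (1) of the lemma) and whose spacing is light-tailed in the sense of property (2). Everything then reduces to constructing these times from the edge dynamics and controlling their growth, after which Lemma \ref{strategy} delivers the conclusion automatically.

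First I would exploit the product structure of the edge-Markovian dynamics. Because the edges evolve independently with common transition matrix $P$, the configuration chain $\varepsilon_t$ has transition matrix $Q(a|b)=\prod_e P(a(e)|b(e))$ and product stationary law $\pi$, which factors the separation distance as $s(k)=1-(\min_a\min_b P^k(a,b)/\lambda(a))^{|E|}$. Using $\Delta=1-P(1|0)-P(0|1)$, so that $|\Delta|=|g(n)-f(n)|\le M/n^\alpha$, together with the binary-chain estimate $P^k(a,b)/\lambda(a)\ge 1-\rho|\Delta|^k$ and $|E|=C_2^n=\Theta(n^2)$, I would push this through to obtain the polynomial bound \eqref{eq:UBs}, namely $s(k)\le (1/n^t)^{k-l}$ for suitable constants $t,l$ and all $k\ge 4$.

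Next I would convert the separation-distance bound into the sequence $t_i$. By Proposition 3.2(b) of \cite{aldous1987strong}, for each starting state $a_0$ there is a strong uniform time $T_{a_0}$ with $\p(T_{a_0}>k\mid\varepsilon_0=a_0)=s_{a_0}(k)$; setting $t_0:=0$ and $t_i=t_{i-1}+T_{\varepsilon_{t_{i-1}}}$, the strong Markov property ensures that at each $t_i$ the configuration is $\pi$-distributed and independent of the past, which is exactly property (1). For property (2), the uniform tail estimate \eqref{eq:UBs} gives the stochastic domination $T_a-l\le G$ with $G\sim\mathrm{Geo}(1-1/n^t)$, hence $t_{Cr(n)}\le Cl\,r(n)+\sum_{j=1}^{Cr(n)}G_j$ stochastically. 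A Chernoff bound on this sum of i.i.d.\ geometrics then shows $\p(t_{Cr(n)}>Dr(n))\le \exp(-c\,r(n))$ for appropriately chosen $C,D$, and this vanishes polynomially precisely because $r(n)=\Omega(\log n)$.

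The main obstacle I anticipate is not the final concentration step but the uniform control of the separation distance: one must propagate the per-edge estimate through the product over $\Theta(n^2)$ edges while keeping the decay polynomial in $n$, and verify that the constants $t$ and $l$ in \eqref{eq:UBs} can be chosen so that the Geometric parameter $1-1/n^t$ is close enough to $1$ for the Chernoff exponent to dominate the polynomial high-probability threshold. Once \eqref{eq:UBs} is secured for $k\ge 4$, the remaining steps are routine and Lemma \ref{strategy} finishes the argument.
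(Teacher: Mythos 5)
Your proposal follows essentially the same route as the paper's own argument: factor the separation distance over the independent edges to get the bound \eqref{eq:UBs}, invoke Proposition 3.2(b) of \cite{aldous1987strong} to build the blocks $t_i=t_{i-1}+T_{\varepsilon_{t_{i-1}}}$, dominate the block lengths by i.i.d.\ geometrics, and close with a Chernoff bound feeding into Lemma \ref{strategy}. The obstacle you flag (propagating the per-edge estimate through the product over $\Theta(n^2)$ edges while keeping polynomial decay) is exactly the step the paper itself treats most briskly, so your plan is complete and matches the intended proof.
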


\subsection{Results for the Markov case}
\label{protocolos_markov}

We now state several corollaries, bounding the completion time for some information spreading protocols. The proof for these corollaries follow directly from Theorem \ref{principal} and the results mentioned in the introduction. First we consider the Push and Flood protocols.

\begin{corollary}
\label{theo:flood_push_mar}
Let $G = (G_t)_{t\geq0}$,  $G_t = ([n],\mathcal \varepsilon_t),t\ge0$, be an edge-Markovian random graph under the conditions of Theorem \ref{principal}, then
\begin{enumerate}
    \item The completion time of the Flood protocol over $G$ is $O\left(\frac{\log n}{\log( 1 + n\pi(1))}\right)$ with high probability.
    \item The completion time of the Push protocol over $G$ is $O(\frac{\log n }{\min\{1,n\pi(1)\}})$ with high probability.
\end{enumerate}

\end{corollary}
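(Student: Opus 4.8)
The plan is to show that Corollary \ref{theo:flood_push_mar} follows as a direct application of Theorem \ref{principal}, by matching the i.i.d.\ completion-time bounds from the literature to the growth rate $r(n)$ required by the theorem. The key observation is that Theorem \ref{principal} is stated precisely as a transfer principle: if a protocol completes in time $O(r(n))$ with high probability on a sequence of independent $\ER_n(\pi(1))$ graphs, and if $r(n)=\Omega(\log n)$, then the same $O(r(n))$ bound holds for the edge-Markovian dynamics satisfying the hypotheses on $f(n)$, $g(n)$. So for each protocol I only need to (i) recall the known i.i.d.\ bound with $p$ replaced by the stationary edge-probability $\pi(1)$, and (ii) verify the side condition $r(n)=\Omega(\log n)$.

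For the Flood protocol, I would invoke the result of \cite{clementiflood}, which gives completion time $O\!\left(\frac{\log n}{\log(1+n\pi(1))}\right)$ with high probability in the independent setting; setting $r(n)=\frac{\log n}{\log(1+n\pi(1))}$ and applying Theorem \ref{principal} yields item (1). For the Push protocol, I would invoke \cite{clementi2016rumor}, which establishes the bound $O\!\left(\frac{\log n}{\min\{1,n\pi(1)\}}\right)$ for a sequence of independent $\ER_n(\pi(1))$ graphs; taking $r(n)=\frac{\log n}{\min\{1,n\pi(1)\}}$ and applying the theorem gives item (2). In both cases the bounds are stated with the independent-graph edge probability equal to the stationary marginal $\pi(1)$ of the Markov chain, which is exactly the parameter appearing in Theorem \ref{principal}.

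The one point requiring genuine verification is the growth condition $r(n)=\Omega(\log n)$, since this is what guarantees the block-argument tail bound vanishes polynomially. For the Push protocol this is immediate when $n\pi(1)=O(1)$ (then $r(n)$ is a constant multiple of $\log n$) and more generally holds as long as $\min\{1,n\pi(1)\}$ stays bounded, so $r(n)\ge c\log n$. For the Flood protocol, $r(n)=\frac{\log n}{\log(1+n\pi(1))}$ is $\Omega(\log n)$ precisely when the denominator $\log(1+n\pi(1))$ is bounded, i.e.\ when $n\pi(1)$ does not grow; in the regime where $n\pi(1)\to\infty$ one must check that the resulting $r(n)$, though smaller than $\log n$, is still compatible with the hypotheses, and I expect this to be the main (though mild) obstacle, requiring that the parameter regime under consideration keeps $\pi(1)$ small enough that $r(n)=\Omega(\log n)$.

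I would present the proof tersely: state that both items follow from Theorem \ref{principal} together with the cited i.i.d.\ results, identifying $r(n)$ explicitly in each case and noting that the condition $r(n)=\Omega(\log n)$ is met in the relevant parameter range. No new calculation beyond this matching is needed, since the heavy lifting — the strong-uniform-time construction and the geometric block argument producing the polynomial tail — has already been carried out in the proof of Theorem \ref{principal}.
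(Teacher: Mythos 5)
Your proposal matches the paper's own (one-line) proof: the corollary is obtained by plugging the i.i.d.\ bounds of \cite{clementiflood} (Flood) and \cite{clementi2016rumor} (Push), with edge probability $\pi(1)$, directly into Theorem \ref{principal}. Your extra check of the hypothesis $r(n)=\Omega(\log n)$ is well taken: it holds automatically for the Push protocol since $\min\{1,n\pi(1)\}\le 1$ forces $r(n)\ge \log n$, but for the Flood protocol it genuinely restricts to the regime where $n\pi(1)$ stays bounded, a caveat the paper's terse proof does not make explicit.
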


\begin{remark}
\textbf{a)} Using a coupling argument and part (1) of Corollary \ref{theo:flood_push_mar} , we can consider a new class of edge-Markovian random graphs that do not fall under the conditions of Theorem \ref{principal} for the Flood protocol. Let $G' = (G'_t)_{t\geq 0} = ([n], \varepsilon'_t)$ be a sequence of edge-Markovian random graphs with transition matrix 
\[P' = \begin{pmatrix}
    1- \frac{a}{n^k} & \frac{a}{n^k}\\
    1  & 0
    \end{pmatrix},\] 
    $a>0$ and any $k$. The completion time of the Flood protocol over $G'$ is $O\left(\frac{\log n}{\log( 1 + \frac{1}{n^{k-1}})}\right)$ with high probability. Let $G'' = (G''_t)_{t\geq 0} = ([n], \varepsilon''_t)$ be a sequence of edge-Markovian random graphs with transition matrix \[P'' = \begin{pmatrix}
    1- \frac{a}{n^k} & \frac{a}{n^k}\\
    1-\alpha   & \alpha
    \end{pmatrix},\] $\alpha >0$. We can couple $\varepsilon'_t$ and $\varepsilon''_t$ such that $\varepsilon'_t \subseteq \varepsilon''_t$, $\forall t\ge 1$. Let $I'_t$ and $I''_t$ be the set of informed nodes at time $t$ in $G'$ and $G''$, respectively. Since the Flood protocol informs every neighbor of informed nodes each step, we have that $I'_t \subseteq I''_t$. Therefore the completion time on $G''$ is at most the completion time on $G'$. We conclude that the completion time of the flood protocol over $G''$ is $O\left(\frac{\log n}{\log( 1 + \frac{1}{n^{k-1}})}\right)$.\\

    \textbf{b)} We also note that the Push protocol takes at least $\log_2 n$ steps to complete transmission (the number of informed nodes at most doubles, each time step). So when $\min\{n\pi(1), 1\} = 1$, the completion time is $\Theta(\log n)$.
\end{remark}

The following Corollary \ref{caso_especial} highlights a special case of part (2) of Corollary \ref{theo:flood_push_mar} over a sequence of edge-Markovian graphs where $p = \Theta(\frac{1}{n^k})$, $k > 1$. As far as we know, this case was not considered previously in the literature.

\begin{corollary}
\label{caso_especial}
     Let $([n],\mathcal \varepsilon_t)$, $t\ge0$, be a sequence of edge-Markovian random graphs with $f(n) = \frac{a}{n^k}$ and $g(n) = 0$, in which $a>0$ and $k>1$ are constants. The completion time of the Push protocol over $G$ is $O(n^{k-1}\log n)$
\end{corollary}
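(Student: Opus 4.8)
The plan is to read the result off directly from part (2) of Corollary \ref{theo:flood_push_mar}, so the only genuine work is to identify the stationary edge-probability $\pi(1)$ and to check that the hypotheses of Theorem \ref{principal} are satisfied in this degenerate case.

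First I would verify the hypotheses. With $f(n)=a/n^k$ and $g(n)=0$, both functions are (weakly) decreasing and share the common limit $\gamma=0$, and $|g(n)-f(n)|=a/n^k\le M/n^\alpha$ on taking $M=a$ and $\alpha=k>0$. It is worth checking that the two-state chain remains irreducible and aperiodic at the boundary value $q=P(0|1)=1$: from state $0$ one reaches state $1$ with probability $f(n)>0$ and retains the self-loop $P(0|0)=1-f(n)>0$, while state $1$ jumps deterministically to $0$; hence $\pi$ is well defined and the strong-uniform-time machinery behind Theorem \ref{principal} applies unchanged.

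Next I would compute $\pi(1)$. Solving $\lambda P=\lambda$ for the two-state chain gives $\lambda(1)/\lambda(0)=f(n)/(1-g(n))=a/n^k$, whence
\[
\pi(1)=\frac{a}{n^k+a}\sim \frac{a}{n^k}.
\]
Consequently $n\pi(1)=an/(n^k+a)\sim a/n^{k-1}\to 0$ because $k>1$, so for all large $n$ we are in the regime $\min\{1,n\pi(1)\}=n\pi(1)$ (the opposite branch to the illustrative example in the text, where $\pi(1)>1/n$ forced the minimum to equal $1$).

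Finally, substituting into the bound of Corollary \ref{theo:flood_push_mar}(2) yields a completion time of order
\[
\frac{\log n}{\min\{1,n\pi(1)\}}=\frac{\log n}{n\pi(1)}\sim \frac{n^{k-1}\log n}{a}=O(n^{k-1}\log n),
\]
and this rate is $\Omega(\log n)$ since $n^{k-1}\ge 1$, so the hypothesis $r(n)=\Omega(\log n)$ needed to invoke Theorem \ref{principal} holds. I do not anticipate any real obstacle, as the substantive content is already contained in Corollary \ref{theo:flood_push_mar}; the only point deserving care is the degenerate transition $q=1$, where one must confirm that irreducibility and aperiodicity (and hence the existence of the strong uniform time) survive. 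Once that is checked the remaining computation is routine.
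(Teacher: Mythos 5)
Your proposal is correct and follows the paper's own route: the paper obtains Corollary~\ref{caso_especial} as a direct special case of Corollary~\ref{theo:flood_push_mar}(2), i.e.\ of Theorem~\ref{principal} combined with the $O\!\left(\frac{\log n}{\min\{1,n\pi(1)\}}\right)$ bound for the Push protocol on i.i.d.\ Erd\H{o}s--R\'enyi graphs, exactly as you do. Your explicit computation $\pi(1)=\frac{a}{n^k+a}\sim \frac{a}{n^k}$, the observation that $n\pi(1)\to 0$ forces $\min\{1,n\pi(1)\}=n\pi(1)$, and the verification of the hypotheses of Theorem~\ref{principal} (including irreducibility and aperiodicity at the boundary value $q=1$, and $r(n)=\Omega(\log n)$) are precisely the details the paper leaves implicit, and they are carried out correctly.
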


We now consider the Pull and Push-Pull protocols. As far as we know, this is the first time bounds for the completion time are given for these protocols on edge-Markovian graphs. The conditions for these protocols are more restrictive, only allowing for transition matrices with $\pi(1) = \frac{a}{n}$.

\begin{corollary}\label{theo:pull_push_pull_mar}
Let $G = (G_t)_{t\geq0}$,  $G_t = ([n],\mathcal \varepsilon_t),t\ge0$, be an edge-Markovian random graph under the conditions of Theorem \ref{principal} with $\pi(1) = \frac{a}{n}$, $a >0$, then
\begin{enumerate}
    \item The completion time of the Pull protocol over $G$ is $O(\log n)$ with high probability.
    \item The completion time of the Push-Pull protocol over $G$ is $O(\log n)$ with high probability.
\end{enumerate}
\end{corollary}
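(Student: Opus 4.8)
The plan is to obtain both items as immediate consequences of Theorem \ref{principal}, whose purpose is precisely to transfer a high-probability completion-time bound from the independent $\ER_n(\pi(1))$ dynamic to the edge-Markovian one, provided that bound is $O(r(n))$ with $r(n) = \Omega(\log n)$. Thus the proof reduces to naming the correct rate $r(n)$ and citing the matching result in the i.i.d.\ setting; no new probabilistic estimate on the graphs themselves is required.

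Concretely, I would take $r(n) = \log n$ for both protocols and invoke the analysis of \cite{daknama}: on a sequence of independent $\ER_n(a/n)$ graphs the completion times of the Pull and of the Push-Pull protocols are $O(\log n)$. Since here $\pi(1) = a/n$, the independent graphs carrying the stationary marginal are precisely $\ER_n(a/n)$, so Daknama's hypotheses are met; one also checks that this regime is compatible with Theorem \ref{principal}, e.g.\ taking $g(n) = 0$ gives $\pi(1) = f(n)/(1 + f(n)) = a/n$ with $f(n) = a/(n-a)$ decreasing to $0$ and $|g(n) - f(n)| \le M/n$. As $r(n) = \log n$ is trivially $\Omega(\log n)$, the hypotheses of Theorem \ref{principal} hold, and it yields completion time $O(\log n)$ with high probability on $G$ for both protocols.

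The one point needing care --- and the step I expect to be the main obstacle --- is that Theorem \ref{principal} requires the i.i.d.\ completion time to be $O(r(n))$ \emph{with high probability}, whereas \cite{daknama} is phrased for the \emph{expected} completion time. The resolution is that Daknama's bounds rest on the method of \cite{doerr}, which delivers, alongside the expectation, a large-deviation estimate for the completion time; it is this concentration statement that must be quoted to ensure the failure probability decays polynomially in $n$, matching the ``with high probability'' convention and legitimately feeding into Theorem \ref{principal}. Once this upgrade is recorded, both items follow by direct substitution.
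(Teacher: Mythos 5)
Your proof takes essentially the same route as the paper: the corollary is obtained by feeding the $O(\log n)$ bounds of \cite{daknama} for the Pull and Push--Pull protocols on independent $\ER_n(a/n)$ graphs into Theorem \ref{principal} with $r(n)=\log n$. Your additional remark about upgrading the expectation bound to a high-probability statement via the large-deviation machinery of \cite{doerr} addresses a point the paper leaves implicit, and is a worthwhile clarification rather than a deviation in approach.
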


\section{Application 2: beyond Markov}
\subsection{Stationary Times Through Coupling From the Past}
\label{STCFTP}
In this Section we introduce a method to obtain strong stationary times for any process that can be perfectly simulated: for our purposes, we use that to mean we can construct an almost surely finite \emph{coupling from the past} (CFTP) algorithm. CFTP is an algorithm to sample exactly from the stationary distribution of a stochastic chain. It was introduced in the 1990s in the context of Markov chains (see the seminal paper \cite{propp/wilson/1996}) and later extended to a class of non-Markovian chains (see \cite{comets/fernandez/ferrari/2002}, \cite{gallo/2009}, \cite{gallo/garcia/2013}). In what follows we give a basic definition of the CFTP algorithm and show how it can be used, coupled with our strategy Lemma \ref{strategy}, to bound the completion time of randomized rumor spreading protocols. As far as we know, there is no existing theory in the literature for stationary times in non-Markovian chains.

Let $A$ be a finite alphabet. Let $A^{-\mathbb{N}}$ be the set of infinite strings of past symbols. Let $(X_t)_{t \in \mathbb{Z}}$ be a discrete time stochastic process in $A$. Let \[
P: A \times A^{-\mathbb{N}} \to [0,1]\]
\[
(a , \hat a) \to P(a | \hat a)
\]
be the transition probability kernel. A stationary sequence of random variables $(X_i)_{i \in \mathbb{Z}}$ is said to be compatible with P if
\[
\text{Prob}(X_0 = a_0|X_{-1} = a_{-1},X_{-2} = a_{-2},\dots) = P(a|a_{-1},a_{-2}, \dots).
\]
Let $(U_t)_{t\in\mathbb{Z}}$ be a sequence of i.i.d. random variables uniformly distributed in $[0,1]$. We construct an update function $f: A^{-\mathbb{N}} \times [0,1] \to A$ such that:
\[
\mathbb{P}(f(U_{t+1}, \hat a) = a) = P(a | \hat a),
\]
for any $a \in A$, $\hat a \in A^{-\mathbb{N}}$. Let
\[
X_{t+1} = f(U_{t+1}, X_t, X_{t-1},\dots).
\]
Consequently, $(X_{t})_{t \in \mathbb{Z}}$ is compatible with $P$.
Let $k < l$ be integers, through iterations of $f$ we obtain a function $f_{k,l} : A^{-\mathbb{N}} \times [0,1]^k \to A$  allowing us to see the $k$-th symbol in a sequence with a fixed past $\hat a \in  A^{-\mathbb{N}}$ using $U_k, U_{k+1}, \dots U_l$. 
We define the coalescing time \[
\theta_{k}:= \max\{i\leq k : f_{i,k}(\hat a, U_i, \dots, U_k)\text{ does not depend on $\hat a$}\}.\]

If we can prove that the coalescing time is almost surely finite, coupling from the past is the following algorithm, used to sample from the stationary distribution of $(X_t)$.

\begin{itemize}
    \item Compute $\theta_{0}$.
    \item Compute $X_0 = f_{\theta_{0},0}(\hat a, U_{\theta_0},\dots,U_0)$ (does not depend on $\hat a$).
    \item $X_0$ is a sample of the unique stationary chain compatible with $P$.
\end{itemize}

Suppose that for $(X_t)_{t \in \mathbb{Z}}$, we have that $\theta_0$ is almost surely finite. We can now construct a sequence of strong stationary times for $(X_t)_{t \in \mathbb{Z}}$. Let $\tau_1 : = \theta_0$. Suppose we repeat the coupling from the past algorithm starting at $\tau_1 - 1$. There exists a second finite coalescing time $\tau_2 := \theta_{\tau_1 - 1}$ and the algorithm yields that $X_{\tau_1 -1}$ is distributed according to the stationary distribution of the process and is independent of $X_0$.

We repeat this $N-2$ times and conclude that $t_1 = 0$, $t_2 = \tau_1 -1 $, $t_3 = \tau_2 - 1$,..., $t_N = \tau_{N-1} - 1$ is a sequence of strong stationary times for $(X_n)$: i.i.d. random times such that $X_{t_i}$ follows the stationary distribution. Combined with Lemma \ref{strategy}, this yields the following Theorem \ref{cftp_rumor}.

\begin{theorem}
\label{cftp_rumor}
     Let $([n], \varepsilon_t),{t\in \mathbb{Z}},$ be a sequence of non-independent random graphs. Consider an information spreading protocol. Suppose that its completion time in the i.i.d. case is $O(r(n))$ with high probability. If there exists an almost surely finite coupling for the past algorithm for $(\varepsilon_t)_{t \geq 0}$ such that, for sufficiently large constants $C,D$, $\lambda >0$:
     \[
     \mathbb{P}(-t_{C r(n)} > D r(n)) = \frac{1}{n^\lambda},
     \]
     the completion time of the protocol over $(G_t)$ is $O(r(n))$ with high probability.
\end{theorem}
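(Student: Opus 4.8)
The plan is to reduce Theorem \ref{cftp_rumor} to Lemma \ref{strategy} by verifying its two hypotheses for the sequence of random times constructed via iterated coupling from the past. The essential observation is that the work of building the times has already been done in the discussion preceding the statement: starting from the coalescing time $\theta_0$, one sets $\tau_1 := \theta_0$ and then reruns CFTP from $\tau_1 - 1$ to obtain $\tau_2 := \theta_{\tau_1 - 1}$, and so on, producing times $t_1 = 0$, $t_2 = \tau_1 - 1$, $t_3 = \tau_2 - 1, \dots$ Thus my first step is simply to invoke this construction and record that it yields a genuine (backward) sequence of strong stationary times.

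Next I would verify property (1) of Lemma \ref{strategy}, namely that the graphs $([n], \varepsilon_{t_i})$ are independent and each distributed according to $\pi$. This is exactly the defining output of the CFTP algorithm: each coalescence produces a sample $X_{t_i}$ of the unique stationary chain compatible with the kernel $P$, so the marginal is $\pi$ by construction. Independence follows because, once coalescence occurs at $\tau_i - 1$, the value $X_{\tau_i - 1}$ depends only on the uniform variables $U_j$ with $j \ge \tau_i$ and not on the past $\hat a$, hence in particular not on $X_0, \dots, X_{\tau_{i-1}-1}$; the successive reruns of the algorithm use disjoint blocks of the $(U_t)$ and therefore generate independent samples. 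I would spell out this disjointness-of-randomness argument carefully, since it is the conceptual heart of why the construction works beyond the Markov case.

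The remaining step is property (2), the light-tail condition $\mathbb{P}(t_{Cr(n)} > Dr(n))$ vanishing polynomially. Here the hypothesis of the theorem enters directly: we are given that $\mathbb{P}(-t_{Cr(n)} > Dr(n)) = n^{-\lambda}$ for suitable constants, where the sign reflects that the CFTP times run backward into the past. Since the process is stationary and reversible in distribution for the purpose of counting how many independent stationary samples we accumulate, the number of coalescences within a window of length $Dr(n)$ into the past controls completion just as a forward window would. I would therefore translate the backward-time bound into the exact form required by property (2) of Lemma \ref{strategy}, note that the assumption $r(n) = \Omega(\log n)$ guarantees the polynomial decay is meaningful, and then apply the Lemma to conclude that the completion time over $(G_t)$ is $O(r(n))$ with high probability.

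The main obstacle I anticipate is making the independence claim in property (1) fully rigorous. The subtlety is that the coalescing times $\tau_i$ are themselves random and determined by the $(U_t)$, so one must argue that the samples $X_{t_i}$ are independent despite the random, data-dependent block boundaries; the clean way is to observe that conditionally on coalescence having occurred, the returned value is a deterministic function of a block of $U$'s that is measurably separated from the blocks used for the other samples, and then to use the i.i.d. structure of $(U_t)$ together with the almost-sure finiteness of each $\theta$. A secondary point requiring care is the bookkeeping of the backward time direction, ensuring that the "past" times $t_i < 0$ are correctly matched against the forward completion-time count of the independent-graph protocol when feeding into Lemma \ref{strategy}.
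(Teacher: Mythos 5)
Your proposal matches the paper's argument: the paper gives no separate proof of Theorem \ref{cftp_rumor}, stating only that the iterated CFTP construction of the times $t_1=0$, $t_{i+1}=\tau_i-1$ described just before it, ``combined with Lemma \ref{strategy}, yields'' the theorem, which is exactly your reduction (property (1) from the disjoint blocks of uniforms producing independent stationary samples, property (2) from the hypothesized tail bound on $-t_{Cr(n)}$). You actually fill in more detail than the paper does, in particular on the independence of samples across random block boundaries and on the backward-time bookkeeping, both of which the paper leaves implicit.
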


\subsection{Example 2: The renewal case.}

In this case, for every $e \in E $, the process  ${\bf X}(e)$ is a binary stationary renewal process, with renewal state $0$ (absent edge). For each edge $e \in E$ we sample i.i.d. random times (non-negative random variables with finite mean) $(Z_i^e)_{i\geq 1}$. Their distribution may depend on the number of nodes $n$, we will omit $n$ in our notation. Let $\mu := E(Z_1^e)$. We then define $Z_0^e$ as a random variable with distribution
\[
\text{Prob}(Z_0^e = t) := \frac{\text{Prob}(Z_1^e > t)}{\mu},
\]
for any $t \in \mathbb{N}$, and let  $X_{\sum_{i = 0}^k Z_i^e}(e) = 0$, for $k\geq 0$, and $X_t(e) = 1$ for every other instant $t$. This construction guarantees that the renewal process is stationary, with marginal stationary distribution $\pi = (\pi(0), \pi(1))$, where $\pi(0) = \frac{1}{\mu}$. We name such graphs edge-renewal random graphs. Since an edge is independent of every other edge, we have that the joint process for all edges $(\varepsilon_t)_{t \geq 1}$ is a renewal process as well, with renewal state $\{0\}^{|E|}$ (every edge is absent). Under sufficient conditions, we prove that there exists a random sequence of integers $t_1,t_2,\dots,t_N$ such that: $([n], \varepsilon_{t_i})$ are independent and stationary.

Suppose that for every $e \in E$, 
\begin{center}
    \begin{equation}
    \label{minori}
        \inf_i \text{Prob}\left(\sum_{k=1}^j Z_k^e =  i+1 | \sum_{k = 1}^j Z_k^e \geq i\right)= 1 - \alpha > 0,
    \end{equation}
\end{center}

Let $((U_{-i}^e)_{i\geq 0})_{e\in E}$ be a sequence of i.i.d. random variables with uniform distribution in $[0,1]$. \cite{gallo/2009} shows that the following coalescing time $\theta_0 = \inf\{i \geq 0: U_{-i}^e \leq 1 - \alpha\text{ } \forall \text{ } e \in E\}$ is almost surely finite. Therefore we can perfectly simulate from the process. We define $t_1,t_2,\dots,t_N$  as in Section \ref{STCFTP}. Since $\theta_0$ is geometric, we have that, for any $i \in \{2,3,\dots,N\}$,
    \begin{center}
    \begin{equation}
    \label{geometrica}
    \mathbb{P}(|t_i - t_{i-1}| > k) = \left(1-(1-\alpha)^{n \choose 2}\right)^k,
    \end{equation}
    \end{center}
which will be useful in what follows.

 We can now state a Corollary of Theorem \ref{cftp_rumor}.

\begin{corollary}
    
\label{rumor_renewal}
    Let $([n], \varepsilon_t), t \in \mathbb{N}$, be an edge-renewal random graph satisfying \eqref{minori} with $\alpha = \alpha_n := \frac{g(n)}{n^\lambda}$, in which $\lambda > 0$ and $g(n)$ is a function with constant limit $\gamma \geq 0$ such that $\frac{g(n)}{n^\lambda}$ decreases polynomially to $0$ as $n$ diverges and $\alpha \in (0,1)$. Let $T_{\text{Ind}}$ be the completion time of some protocol spreading information in a sequence of independent Erd\H{o}s-Rényi graphs with parameter $\pi(1)$. Suppose that $T_{\text{Ind}}$ is $O(r(n))$ with high probability. If $r(n) = \Omega(\log n)$, then the completion time of the protocol over $G$ is $O(r(n))$ with high probability.
\end{corollary}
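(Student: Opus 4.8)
The plan is to deduce this corollary directly from Theorem \ref{cftp_rumor}, so the entire task reduces to verifying the single tail hypothesis of that theorem: that for suitable constants $C,D$ the quantity $\mathbb{P}(-t_{C r(n)} > D r(n))$ decays polynomially in $n$. All of the structural ingredients have already been assembled in the discussion preceding the statement. Under the minorization \eqref{minori}, the result of \cite{gallo/2009} guarantees that the coalescing time $\theta_0$ is almost surely finite, so the CFTP algorithm terminates and produces the i.i.d.\ strong stationary times $t_1, t_2, \dots$ as constructed in Section \ref{STCFTP}. Thus the hypotheses of Theorem \ref{cftp_rumor} are in place except for the quantitative tail bound, which is what the proof must supply.

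First I would use \eqref{geometrica} to read off the law of the increments: the successive gaps $|t_i - t_{i-1}|$ are i.i.d.\ geometric random variables with success probability $P_n := (1-\alpha_n)^{\binom{n}{2}}$. Since the strong stationary times march backward into the past, the total displacement telescopes as $-t_{C r(n)} = \sum_{i=2}^{C r(n)} |t_i - t_{i-1}|$, a sum of roughly $C r(n)$ i.i.d.\ geometric variables, each of mean $1/P_n$. The problem is therefore the concentration of such a geometric sum around its mean $C r(n)/P_n$.

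The crux, and the step I expect to be the main obstacle, is controlling $P_n$. Using $(1-\alpha_n)^{\binom{n}{2}} \ge \exp\!\big(-c\,\alpha_n \binom{n}{2}\big)$ together with $\alpha_n = g(n)/n^\lambda$ and $\binom{n}{2} \sim n^2/2$, I would show that the stated conditions on $g$ and $\lambda$ keep the product $\alpha_n \binom{n}{2}$ bounded, so that $P_n \ge p_0 > 0$ for a constant $p_0$ independent of $n$. This is precisely the reason the corollary imposes a polynomial scaling on $\alpha_n$: if $P_n$ were allowed to decay with $n$, the mean gap $1/P_n$ would diverge and no fixed constant $D$ could ever dominate $C r(n)/P_n$, so the tail hypothesis of Theorem \ref{cftp_rumor} would fail. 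Verifying that the prescribed scaling pins $P_n$ away from zero is the delicate quantitative point.

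With $P_n \ge p_0 > 0$ established, the remainder mirrors the block argument used in the Markov case. Each gap is stochastically dominated by a geometric variable of fixed parameter $p_0$, so a Chernoff bound for sums of i.i.d.\ geometric random variables yields, once $D/C$ is taken large enough, an estimate of the form $\mathbb{P}(-t_{C r(n)} > D r(n)) \le \exp(-c'\, r(n))$ for some constant $c' > 0$. Because $r(n) = \Omega(\log n)$, this is bounded by $n^{-\lambda'}$ for some $\lambda' > 0$, which is exactly the polynomial decay required. Invoking Theorem \ref{cftp_rumor} then gives that the completion time of the protocol over $G$ is $O(r(n))$ with high probability, completing the proof.
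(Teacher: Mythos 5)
Your overall route is the same as the paper's: reduce everything to the tail hypothesis of Theorem \ref{cftp_rumor} and verify it with a Chernoff bound for a sum of $Cr(n)$ i.i.d.\ geometric increments. The genuine gap sits exactly in the step you yourself single out as the crux. From \eqref{geometrica} the increments are geometric with success probability $P_n=(1-\alpha_n)^{\binom{n}{2}}$, and you claim that the stated hypotheses keep $\alpha_n\binom{n}{2}$ bounded, hence $P_n\ge p_0>0$. They do not: with $\alpha_n=g(n)/n^{\lambda}$, $g(n)\to\gamma$ and $\binom{n}{2}\sim n^2/2$, one has $\alpha_n\binom{n}{2}\sim \gamma\, n^{2-\lambda}/2$, which diverges for every $\lambda<2$ with $\gamma>0$ (e.g.\ $\lambda=1$ and $g\equiv 1$ give $P_n\approx e^{-n/2}$, so the mean gap is exponentially large in $n$ and no constant $D$ can dominate $Cr(n)/P_n$). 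Since the corollary assumes only $\lambda>0$, your argument as written establishes the statement only under the extra hypothesis $g(n)=O(n^{\lambda-2})$, i.e.\ essentially $\lambda\ge 2$; for the remaining range the step "$P_n$ is bounded away from zero" fails.

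For comparison, the paper's own proof sidesteps this by declaring the $G_i$ to be geometric with parameter $1-g(n)/n^{\lambda}=1-\alpha_n$ (the per-edge quantity) rather than $(1-\alpha_n)^{\binom{n}{2}}$, after which the Chernoff computation is immediate for any $D/C$ large enough; but that parameter does not match \eqref{geometrica}, which governs the coalescence of all $\binom{n}{2}$ edges simultaneously. So you have correctly located the real quantitative difficulty in this corollary rather than introduced a new one — your diagnosis of where the obstacle lies is sharper than the paper's treatment — but the resolution you propose does not follow from the hypotheses as stated, and that is the step that would fail.
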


\begin{proof}
Since the edge process can be perfectly simulated there exists a sequence of strong stationary times for this process, therefore, for the strategy in Lemma \ref{strategy} to hold it remains to prove that $\p(-t_{C r(n)}>D r(n))$ vanishes polynomially as $n$ diverges, for sufficiently large constants $C$ and $D$. Let $(G_i)_{i\geq 1}$ be a sequence of i.i.d. random variables geometrically distributed with parameter $1 - \frac{g(n)}{n^\lambda}$. Let $s = \frac{D}{C} - 1$. From \eqref{geometrica}, it follows that
\[ 
\p(-t_{Cr(n)} > Dr(n)) = \p\left(\sum_{i=1}^{Cr(n)} G_i > Dr(n)\right)
\]
\[
= \p\left(\sum_{i=1}^{Cr(n)} (G_i-1) > (D-C)r(n)\right)
\]
\[
\leq \exp\left(-\left(1 - \frac{1}{s}\right)^2\frac{sr(n)}{2}\right),
\]
which vanishes polynomially if $r(n) =\Omega(\log n)$.
\end{proof}

We note that the Corollaries in Section \ref{protocolos_markov} can also be stated for the renewal process dynamic.

For sufficiently large $n$, Corollary \ref{rumor_renewal} holds for the edge-renewal random graph $G$ with \[\text{Prob}\left(Z_1^e =  i+1 \bigg{|}  Z_1^e \geq i\right) = 1 - \frac{1}{n^\lambda}\frac{i+2}{i+1},\]
in which $\lambda \geq 1$. Note that in this case $\alpha = \alpha_n := \frac{1}{n^\lambda}$. The mean renewal time for an edge $e$ is: \[
\mu = \sum_{k=1}^\infty \text{Prob}(T_1^e \geq k) =1 +  \sum_{k=2}^\infty \prod_{i =2 }^k \frac{1}{n^\lambda}\frac{i+2}{i+1} = 1 + \sum_{k=2}^\infty k\frac{1}{n^{(k-1)\lambda}} = \]
\[
1 + \sum_{k=1}^\infty (k+1)\frac{1}{n^{k\lambda}} = 1 + \frac{1}{n^\lambda} \sum_{k=1}^\infty k\frac{1}{n^{(k-1)\lambda}} + \sum_{k=0}^\infty \frac{1}{n^{k\lambda}} - 1 = \frac{n^\lambda}{(n^\lambda -1)^2 } + \frac{n^\lambda}{n^\lambda - 1} = \Theta\left(1\right).
\]
Therefore the stationary edge-probability is:
\[
\pi(1) = 1 -\frac{1}{\mu} = 1 - \frac{(n^\lambda - 1)^2}{ n^\lambda + (n^\lambda - 1)n^\lambda}=\frac{2n^\lambda -1 }{n^\lambda} =\Theta\left(\frac{1}{n^\lambda}\right).
\]
Then, for example, the completion time of the Push protocol over $G$ is $O\left(n^{\lambda - 1}\log n\right)$.

\vspace{1cm}

\bibliographystyle{jtbnew}
\bibliography{sandro_biblio}

\end{document}